\newtheorem{Definition}{Definition}
\newtheorem{Lemma}{Lemma}
\DeclareMathOperator*{\argmax}{arg\,max}
\DeclareMathOperator*{\argmin}{arg\,min}
\newcommand*{\vehSet}{\mathcal{N}}
\newcommand*{\waypSet}{\mathcal{W}}
\newcommand*{\vehNum}{N}
\newcommand*{\bsSet}{\mathcal{M}}
\newcommand*{\bsNum}{M}
\newcommand*{\txTime}{\tau}
\newcommand*{\periodLen}{{T_p}}
\newcommand*{\slotLen}{{T_s}}
\newcommand{\bandwidth}{\mathrm{B}}
\newcommand{\p}[1]{\textbf{P#1}}
\newcommand{\constraint}[1]{\textit{C#1}}
\def\BibTeX{{\rm B\kern-.05em{\sc i\kern-.025em b}\kern-.08em
    T\kern-.1667em\lower.7ex\hbox{E}\kern-.125emX}}
\begin{document}
\title{A Dynamic Improvement Framework for Vehicular Task Offloading
}

\author{Qianren Li, Yuncong Hong, Bojie Lv, and Rui Wang\thanks{ Rui Wang is the corresponding author. \par This work was supported in part by National Natural Science Foundation of China under Grant 62171213, and in part by High level of special funds under Grant G03034K004. }}
\maketitle
\begin{abstract}
    In this paper, the task offloading from vehicles with random velocities is optimized via a novel dynamic improvement framework. Particularly, in a vehicular network with multiple vehicles and base stations (BSs), computing tasks of vehicles are offloaded via BSs to an edge server. Due to the random velocities, the exact trajectories of vehicles cannot be predicted in advance. Hence, instead of deterministic optimization, the cell association, uplink time and throughput allocation of multiple vehicles in a period of task offloading are formulated as a finite-horizon Markov decision process. In the proposed solution framework, we first obtain a reference scheduling scheme of cell association, uplink time and throughput allocation via deterministic optimization at the very beginning. The reference scheduling scheme is then used to approximate the value functions of the Bellman's equations, and the actual scheduling action is determined in each time slot according to the current system state and approximate value functions. Thus, the intensive computation for value iteration in the conventional solution is eliminated. Moreover, a non-trivial average cost upper bound is provided for the proposed solution framework. In the simulation, the random trajectories of vehicles are generated from a high-fidelity traffic simulator. It is shown that the performance gain of the proposed scheduling framework over the baselines is significant.
\end{abstract}

\section{Introduction}
Edge or cloud computing for vehicular networks has become necessary in many emerging applications. For example, the online ride-hailing services, such as Uber and Didi Chuxing, have gained widespread popularity. 
In order to address the safety risks \cite{UbersUSSafety}, the data of safety monitoring in the vehicle should be offloaded to remote servers. However, limited wireless radio resource for task offloading may hinder real-time response of edge or cloud computing. Moreover, the high mobility of vehicles may bring uncertainty in the uplink transmission scheduling, and hence, degrade the scheduling efficiency.

In fact, there have been a number of research efforts devoted to the efficient scheduling of task offloading in vehicular networks \cite{zhangJointOffloadingDecision2021, jiaLearningBasedQueuingDelayAware2021, choEnergyEfficientCooperativeOffloading2022,malekiHandoverEnabledDynamicComputation2023, zhangMDPBasedTaskOffloading2020, liuDeepReinforcementLearning2019}.
For example, in \cite{zhangJointOffloadingDecision2021}, with the knowledge of the vehicle's velocity, channel allocation and base station association were optimized to minimize the transmission latency of computing tasks. 
Similarly, in \cite{jiaLearningBasedQueuingDelayAware2021}, with prior knowledge of base stations (BSs) association, throughput was maximized by selecting offloading strategies. 
Furthermore, energy consumption for both computation and transmission was minimized in \cite{choEnergyEfficientCooperativeOffloading2022} by optimizing offloading power and task sizes, where constant vehicles' velocities were assumed. In all these works, it was assumed that the trajectories of vehicles are deterministic and known to the scheduler. Hence, the scheduling design of task offloading became deterministic optimization problems. However, in practice, even if the vehicles are moving along predetermined routes, their velocities can be random due to complicated traffic conditions. 

Taking the randomness of trajectories into consideration, the scheduling of task offloading in vehicular networks usually becomes stochastic optimization problems, which could be formulated as Markov decision process (MDP). For instance, a double deep Q-network (DDQN) was employed in \cite{malekiHandoverEnabledDynamicComputation2023} to optimize the offloading and computing decisions for vehicles moving randomly within a city. 
In \cite{zhangMDPBasedTaskOffloading2020}, value iteration was adopted to optimize the BS association, while \cite{liuDeepReinforcementLearning2019} presented a deep reinforcement learning algorithm for optimizing the task offloading strategies. 
Despite their effectiveness, these methods are computationally intensive in model training or value function evaluation. This is because the training or evaluation relies on numerical methods. Thus, no analytical model could be provided for the value function or Q-function. Therefore, their performance could hardly be investigated analytically.

In this paper, we would like to shed some light on the stochastic optimization of task offloading in a vehicular network. Particularly, we consider the scheduling of task offloading from multiple vehicles to an edge server via multiple BSs. Due to the randomness in the vehicles' velocities, we formulate the BS association, uplink time and throughput allocation for task offloading as a finite-horizon MDP. A novel low-complexity solution framework, namely dynamic improvement framework, is proposed, where the value function of the Bellman's equations can be approximated analytically without conventional complicated value iteration. A non-trivial upper bound on the average scheduling cost is also derived for the proposed approximate MDP solution. Finally, a high-fidelity traffic simulator is adopted to generate the trajectories of vehicles, and significant performance gain of the proposed scheduling framework is demonstrated in the simulation.

\section{System Model}
\label{sec: System Model}
The task offloading of $\vehNum$ vehicles in a scheduling period of $\periodLen$ seconds via a cellular network of $\bsNum$ BSs is considered in this paper. The sets of vehicles and BSs are denoted as $\vehSet = \{ 1, 2, \ldots, \vehNum\}$ and $\bsSet = \{1,2,\ldots, \bsNum\}$, respectively. The formers are moving along predetermined routes with random velocities. One computing task is generated at each vehicle in the scheduling period and uploaded to an edge server via the BSs. The edge server is connected with all the $M$ BSs. The tasks are offloaded when the vehicles are moving, there could be handover during task offloading. The BSs operate on different frequency bands, each with a bandwidth of $\bandwidth$ Hz, to avoid inter-cell interference. The vehicles' mobility model, uplink transmission model, and the tasks' queuing model are elaborated below.

\subsection{Mobility Model of Vehicles}
The predetermined routes of all vehicles are known to a centralized offloading scheduler as in the popular ride-hailing service (e.g., Didi and Uber). To facilitate the offloading scheduling, each route is quantized to a sequence of waypoints. For example, the route of the $n$-th vehicle ($\forall n$) is represented as  $\waypSet_n \triangleq (\boldsymbol{l}^0_n, \boldsymbol{l}^1_n, \ldots, \boldsymbol{l}^{|\waypSet_n|-1}_n)$, where  $\boldsymbol{l}^0_n$ is the initial position of the $n$-th vehicle at the beginning of the scheduling period, and $\boldsymbol{l}^i_n \in \mathbb{R}^{2 \times 1}$ is the position of the $i$-th waypoint. 

The position of the $n$-th vehicle versus time in the considered scheduling period, referred to as its trajectory, can be modeled as a stochastic process on $\waypSet_n$. Particularly, the scheduling period $\periodLen$ is divided into $T$ time slots, and the slot duration is $\slotLen=\periodLen/T$. Let $\boldsymbol{l}_{n,t} \in \waypSet_n$ represents the location of $n$-th vehicle in the $t$-th slot. The trajectory of the $n$-th vehicle, denoted as $(\boldsymbol{l}_{n,1}, \boldsymbol{l}_{n,2}, \ldots, \boldsymbol{l}_{n,T})$, is modeled as a stationary Markov chain, where the transition probability matrix $\mathbf{P}_n \in \mathbb{R}^{\vert \waypSet_n \vert \times \vert \waypSet_n \vert}$ is defined as
\begin{equation}
 [\mathbf{P}_n]_{ij} \triangleq \mathds{P}\left[\boldsymbol{l}_{n,t+1} = \boldsymbol{l}^{i}_n | \boldsymbol{l}_{n, t} = \boldsymbol{l}^j_n\right], \forall i, j,
\end{equation}
and $[\mathbf{P}_n]_{ij}$ is the $(i,j)$-th element of matrix $\mathbf{P}_n$. Moreover, the aggregation matrix $\boldsymbol{L}_t \in \mathbb{R}^{2 \times \vehNum}$ of $\vehNum$ vehicles' locations in $t$-th slot is given by
\begin{equation}
 \boldsymbol{L}_t \triangleq [\boldsymbol{l}_{1,t}, \boldsymbol{l}_{2,t}, \ldots, \boldsymbol{l}_{\vehNum,t}].
\end{equation}
It is assumed that transition probability matrix $\mathbf{P}_n $ is known to the centralized offloading scheduler. In fact, a high-fidelity simulator has been proposed in \cite{zhang2021distributed} to obtain the above matrix for arbitrary road maps and traffic scenarios. 

\subsection{Uplink Model}
\label{section: Channel Model}

Let $\{\boldsymbol{l}_1,...,\boldsymbol{l}_\bsNum\}$ be the locations of $\bsNum$ BSs, respectively. In each time slot, one vehicle can be associated with at most one BS for the uplink transmission of its computing task. Let $e_{n,m,t} \in \mathcal{E} \triangleq \{0, 1\}$ be the indicator of BS association, where $e_{n,m,t} = 1$ indicates the association of the $n$-th vehicle with the $m$-th BS in the $t$-th time slot and $e_{n,m,t} = 0$ otherwise.
Define the association matrix of the $t$-th time slot $\boldsymbol{E}_t \in \mathcal{E}^{\vehNum \times \bsNum}$ as
$
 \left[\boldsymbol{E}_t\right]_{n,m} = e_{n,m,t}.
$
The constraint of unique BS association can be written as
\begin{equation}
 \constraint{1}: \boldsymbol{E}_t \mathbf{1} = \mathbf{1}, \forall t,
\end{equation}
where $\mathbf{1}$ is the column vector of 1s.

The time-division multiple access (TDMA) mechanism is adopted in each cell. Let $\tau_{n,t}$ be the uplink time ratio of the associated cell allocated to the $n$-th vehicle in the $t$-th time slot, and $ \boldsymbol{\txTime}_t \triangleq [\txTime_{1,t}, \txTime_{2,t}, \ldots, \txTime_{\vehNum,t}]^T \in \mathbb{R}^{\vehNum\times 1} $ be the aggregation vector of time ratios. The uplink transmission time of the $n$-th vehicle in the $t$-th time slot is $\tau_{n,t} T_s$, and the transmission time constraints of all $M$ cells can be written as
\begin{equation}
 \constraint{2}: \mathbf{0} \preceq \boldsymbol{\txTime}_t, \forall t, \quad
 \constraint{3}:  \boldsymbol{\txTime}^T \boldsymbol{E}_t \preceq \mathbf{1}, \forall t,
\end{equation}
where $\preceq$ denotes the element-wise greater or equal relationship, and $\mathbf{0} $ is the column vector of 0s.

In the $t$-th time slot, given the locations of $n$-th vehicle, the path loss of its uplink channel to the $m$-th BS is modeled as 
\begin{equation}
 g_{n,m,t} = G \frac{1}{\| \boldsymbol{l}_{n,t} - \boldsymbol{l}_m \|_2^\gamma},
\end{equation}
where $G$ is a constant, and $\gamma$ denotes the pathloss exponent. It is assumed that the time slot duration $T_s$ is sufficiently large, such that the ergodic channel capacity can be achieved. Hence, the throughput of the $n$-th vehicle in the $t$-th time slot (if it is associated with the $m$-th BS) can be written as 
\begin{equation}
    \label{equation: Throughput}
 r_{n,t} = \txTime_{n,t} \slotLen \bandwidth \mathds{E}_h \left[\log_2 \left(1 + \frac{P_{n, t} g_{n,m,t} \vert h \vert^2}{\sigma^2}\right) \right],
\end{equation}
where $\sigma^2$ is the average noise power, the expectation is taken over the fast fading coefficient $h$, and $P_{n,t}$ is the uplink power with the peak power constraint $P_{\max}$. Thus,
\begin{equation}
 \constraint{4}: 0 \leq P_{n,t} \leq P_{\max}, \forall n, t.
\end{equation} 

\subsection{Task Queuing Model}

In practice, the task arrival of the vehicles is not synchronous. It is assumed that the computing task of the $n$-th vehicle arrives at the beginning of the $T_{n,a}$-th time slot, with a size of $D_{n,a}$ information bits. Hence, the vehicle starts the uplink transmission since the $T_{n,a}$-th time slot. The arrival tasks are buffered at the transmitters. Let $d_{n,t}$ be the number of information bits buffered at the $n$-th vehicle at the beginning of the $t$-th time slot. Its uplink queue dynamics are given by
\begin{equation}
 d_{n, t+1} = \begin{cases}
      0,                                   & t < T_n^a, \\
      \left(D_n^a - r_{n,t}\right)^+,      & t = T_n^a, \\
      \left(d_{n, t} - r_{n,t}\right)^+,   & \text{otherwise},
    \end{cases}
\end{equation}
where $\left( \cdot \right)^+=\max \{0, \cdot \}$. 

As a remark, notice that we focus on the offloading scheduling in a scheduling period of $T_p$ seconds. In practice, the time horizon can be divided into a sequence of scheduling periods, and our design can be applied on each scheduling period respectively. Different scheduling periods might be with different sets of vehicles in task offloading. Due to the late arrival, some tasks might not be completely offloaded in one scheduling period. They can be treated as new tasks in the next scheduling period. 

\section{Problem Formulation}
\label{sec: problem formulation}
The joint scheduling of BS association, time allocation and uplink power adaptation (or equivalently throughput adaptation) of all vehicles in all $T$ time slots is considered in this paper. Due to the randomness in the vehicles' mobility, there is uncertainty on the uplink throughput at the start of the scheduling period. Hence, the joint scheduling design is a stochastic optimization problem, which is formulated as a finite-horizon MDP in this section. Particularly, the system state, action, and policy are first defined in the following.
\begin{Definition}[System State]
  \label{def:state}
 The system state in the $t$-th time slot ($\forall t=1,2,...,T$) is a tuple of buffered information bits and the locations of all the vehicles, i.e.,
  \begin{equation}
 \mathsf{S}_t = (\boldsymbol{d}_t, \boldsymbol{L}_t),
  \end{equation}
 where $\boldsymbol{d}_t \triangleq [d_{1,t}, d_{2,t}, \ldots, d_{N,t}]$ is the vector of buffered information bits.
\end{Definition}

\begin{Definition}[Policy]
  \label{def:policy}
 The action in the $t$-th time slot ($\forall t=1,2,...,T$), denoted as $\mathsf{A}_t $, consists of the BS association, uplink time ratio, and uplink throughput\footnote{We define the uplink throughput, instead of uplink power, as the scheduling action for the elaboration convenience. In fact, they are equivalent.} of each vehicle, thus,
  \begin{equation}
 \mathsf{A}_t = (\boldsymbol{E}_t, \boldsymbol{\txTime}_t, \boldsymbol{r}_t),\nonumber
  \end{equation}
 where $\boldsymbol{r}_t\triangleq [r_{1,t}, r_{2,t},..., r_{N,t}]$ is the vector of throughput allocation for all the vehicles. 
 The policy $\Omega_t$ of the $t$-th time slot ($\forall t=1,2,...,T$) is a mapping from the system state to the action, $\Omega_t: \mathsf{S}_t \mapsto \mathsf{A}_t$. The overall policy is defined as the aggregation of policies of all the time slots, i.e.,
  \begin{equation}
    \Omega = ( \Omega_1, \Omega_2, \ldots, \Omega_T ).\nonumber
  \end{equation}
\end{Definition}

The joint scheduling design of this paper is to minimize the offloading latency of all tasks, while saving the uplink energy consumption of all vehicles. Hence, we consider a weighted summation of non-empty buffer penalty and uplink energy consumption as the system cost of each time slot. Moreover, it is possible that the task offloading of one vehicle may not be accomplished at the end of the scheduling period, especially with a late task arrival or poor uplink channel condition. An additional penalty against unaccomplished offloading is considered in the $T$-th (last) time slot. Therefore, the cost function for $n$-th vehicle in the $t$-th time slot is written as
\begin{equation*}
  \label{equation: per-slot cost function}
 \mathsf{c}_{n,t}(\mathsf{S}_{t},\!\mathsf{A}_{t})\!\triangleq\!\begin{cases}
 \mathds{1}( d_{n,t}\!>\!0 )\!+\!\omega_1\!P_{n,t}\txTime_{n,t}\!& t < T\!, \\
 \mathds{1}( d_{n,t}\!>\!0 )\!+\!\omega_1\!P_{n,t}\txTime_{n,t}\!+\!\omega_2d_{n,T+1} & t = T,
  \end{cases}
\end{equation*}
where the indicator function $\mathds{1}(\mathcal{C})$ is $1$ when the event $\mathcal{C}$ is true and $0$ otherwise, $\omega_1$ and $\omega_2$ are two weights. 
The system cost of $t$-th time slot is then given by $\mathsf{c}_{t} (\mathsf{S}_{t}, \mathsf{A}_{t}) = \sum_{n} \mathsf{c}_{n,t} (\mathsf{S}_{t}, \mathsf{A}_{t}) $.
Hence, the average total cost of the system from the $t$-th time slot can be written as
\begin{equation}
  \label{equation:objective}
 \mathsf{C}_t( \mathsf{S}_t, \Omega ) \triangleq \mathds{E}_{\mathsf{S}_{t+1}, \ldots, \mathsf{S}_T} \left[ \sum_{k = t}^{T} \mathsf{c}_{k} (\mathsf{S}_{k}, \mathsf{A}_{k}) \middle| \mathsf{S}_t , \Omega\right], 
\end{equation}
and the average total cost in one scheduling period can be written as $\mathsf{C}_1( \mathsf{S}_1, \Omega )$.
As a result, the joint scheduling problem can be formulated as the following finite-horizon MDP:
\begin{equation*}
  \begin{aligned}
    \p{1}: \min_{\Omega }  \quad  & \mathsf{C}_1( \mathsf{S}_1, \Omega ), \\
    \text{s.t.} \quad & \constraint{1}, \constraint{2}, \constraint{3}, \constraint{4}.
  \end{aligned}
\end{equation*}

Finding the optimal solution of \p{1} is equal to solving the following Bellman's equation~\cite{bertsekas1996dynamic} in each time slot:
\begin{equation}
  \label{equation: Bellman}
 V_{t}(\mathsf{S}_t)\!=\!\min_{\mathsf{A}_t}\!\{\!\mathsf{c}_t(\mathsf{S}_t, \mathsf{A}_t) + \sum_{\mathsf{S}_{t+1}} \mathds{P}\left[ \mathsf{S}_{t+1} | \mathsf{S}_{t}, \mathsf{A}_t \right] V_{t+1}(\mathsf{S}_{t+1})\!\},
\end{equation}
where the optimal value function $V_t(\mathsf{S}_t)$ is the minimum average cost since the $t$-th time slot with the system state $\mathsf{S}_t$. Thus,
\begin{equation}
 V_t(\mathsf{S}_t) = \min_{\Omega_t,\ldots,\Omega_T }\mathds{E}_{\mathsf{S}_{t+1}, \ldots, \mathsf{S}_T} \left[ \sum_{k=t}^{T} \mathsf{c}_k (\mathsf{S}_k, \mathsf{A}_k) \middle| \mathsf{S}_t \right], \forall t.
\end{equation}
With the above optimal value function, the optimal action of the $t$-th time slot ($\forall t=1,2,...,T$) can be obtained by solving the right-hand side (RHS) of the above Bellman's equations as
\begin{equation}\label{equation: Bellman-RHS}
\Omega_t^{\ast}(\mathsf{S}_t)=\argmin_{\mathsf{A}_t}  \mathsf{c}_t(\mathsf{S}_t, \mathsf{A}_t) + \sum_{\mathsf{S}_{t+1}} \mathds{P}\left[ \mathsf{S}_{t+1} | \mathsf{S}_{t}, \mathsf{A}_t \right] V_{t+1}(\mathsf{S}_{t+1}).
\end{equation}

\section{Dynamic Improvement Solution Framework}
\label{sec: Proposed Framework}
The optimal solution of finite-horizon MDP usually suffers from the curse of dimensionality, where the computation complexity increases exponentially with the number of vehicles. 
To address this issue, we propose a low-complexity dynamic improvement solution framework for the problem \p{1}. 

The proposed framework consists of two phases, namely {\it offline deterministic pre-allocation} and {\it online dynamic improvement}. 
In the former phase, each vehicle is assumed to move with a deterministic average trajectory. By removing the randomness in trajectories, the offloading scheduling becomes deterministic. The obtained BS association, time and throughput allocation are referred to as the {\it reference scheduling}. 

In the online phase, the reference scheduling is used to approximate the optimal value functions, namely the {\it approximate value functions}. Then, the scheduling action of each time slot can be optimized according to (\ref{equation: Bellman-RHS}) with the approximate value functions. Moreover, the reference scheduling is also updated after the online optimization of each time slot.

Despite the above approximation, it can be proved that the proposed dynamic improvement framework can successively suppress the average total cost of the reference scheduling according to the current system state. In other words, the performance is lower-bounded by the reference scheduling. The offline pre-allocation, online dynamic improvement and reference scheduling update are elaborated below, respectively.

\subsection{Offline Deterministic Pre-Allocation}
\label{section: offline pre-allocation}
The phase of offline pre-allocation is to provide reference scheduling actions, denoted as $\Omega_0^H \triangleq \{\mathsf{A}_t^{(0)}=(\boldsymbol{E}_t^{(0)}, \boldsymbol{\tau}_t^{(0)}, \boldsymbol{r}_t^{(0)})| \forall t\}$, for all vehicles in all time slots, such that the optimal value function in \eqref{equation: Bellman} can be approximately expressed with a  non-trivial scheduling scheme.

First, we define the following average trajectories:
\begin{equation*}
    \boldsymbol{l}_{n,t}^{(0)} \triangleq \mathds{E}\left[ \boldsymbol{l}_{n,t} | \boldsymbol{l}_{n,1}\right], \forall n, t.
\end{equation*}
Note that above average trajectories are deterministic, the offloading scheduling based on them can be determined at the beginning of scheduling period. Particularly, the BS associations $\{\boldsymbol{E}_t^{(0)}| \forall t\}$ and time allocations $\{ \boldsymbol{\tau}_t^{(0)}| \forall t\}$ of the reference scheduling $\Omega_0^H$ are given by
\begin{equation*}
	[\boldsymbol{E}_t^{(0)}]_{n,m} \triangleq e_{n, m, t}^{(0)} = \mathds{1}(m = \argmax_{k \in \mathcal{\bsNum}} g_{n,k,t}), \forall n,m,t,
\end{equation*}
and
\begin{equation*}
	[\boldsymbol{\txTime}_t^{(0)}]_{n} \triangleq \txTime_{n,t}^{(0)} = 1/ [\mathbf{1}^T \boldsymbol{E}_t^{(0)} ]_m , \forall n,t,
\end{equation*}
where $[\boldsymbol{\txTime}_t^{(0)}]_{n}$ and $[\mathbf{1}^T \boldsymbol{E}_t^{(0)} ]_m$ denote the $n$-th and $m$-th elements of both vectors respectively, and it is assumed that the $n$-th vehicle is associated with the $m$-th BS.

Then, we rely on the deterministic optimization to determine the uplink throughput $\{ \boldsymbol{r}_t^{(0)}| \forall t\}$ of the reference scheduling. Let $\mathcal{L}_{n,t} \triangleq \{ \boldsymbol{l}_{n,t} | \mathds{P}[\boldsymbol{l}_{n,t} | \boldsymbol{l}_{n,1}] > 0 \}$ be the set of possible positions of the $n$-th vehicle in the $t$-th time slot.
In order to ensure the throughput allocation of the reference scheduling is achievable, the following constraints are imposed: 
\begin{equation}
	\constraint{5}: r_{n,t}^{(0)} \le \bar{r}_{n,t}(\tau_{n,t}^{(0)}), \forall n, t,
\end{equation}
where $\bar{r}_{\raisebox{1.5ex}{\scalebox{0.7}{$n,t$}}}$ is the maximum throughput of the $n$-th vehicle in the $t$-th time slot at the worst position, given by
\begin{equation} 
	\bar{r}_{n,t}(\tau_{n,t}^{(0)})  \triangleq \min_{\boldsymbol{l}_{n,t} \in \mathcal{L}_{n,t} }  \txTime_{n,t}^{(0)} \slotLen \mathds{E}_h [\log_2 (1 + \frac{P_{\max} G \vert h \vert^2}{ \Vert \boldsymbol{l}_{n,t} - \boldsymbol{l}_{m} \Vert _2^{\gamma}\sigma^2})]. \nonumber
\end{equation}
Assuming all vehicles are moving along the average trajectories, the expectation on the random trajectories in \eqref{equation:objective} can be omitted. Hence, the stochastic optimization of throughput allocation in \p{1} can be simplified as the following deterministic optimization problem (with additional constraints $\constraint{5}$).
\begin{equation*}
	\begin{aligned}
		\p{2}: \min_{ \{ \boldsymbol{r}_1, \ldots, \boldsymbol{r}_T \}}  \quad  & \sum_{t=1}^{T} \sum_{n \in \mathcal{N}} \mathsf{c}_{n,t} (d_{n,t}, \boldsymbol{l}_{n,t}^{(0)}, [\boldsymbol{E}_t^{(0)}]_n, \txTime_{n,t}^{(0)}, r_{ n,t}), \\
		\text{s.t.} \quad &  \constraint{1}, \constraint{2}, \constraint{3}, \constraint{4}, \constraint{5}.
	\end{aligned}
\end{equation*}
where $[\boldsymbol{E}_t^{(0)}]_n$ denotes the $n$-th row vector of $\boldsymbol{E}_t^{(0)}$.

Given BS associations and time allocations, the throughput allocation of each vehicle in \p{2} can be decoupled. The optimization of throughput allocation of the $n$-th vehicle in all the time slots can be written as
\begin{equation*}
	\begin{aligned}
		\p{3}: \min_{\{ r_{n,1}, \ldots, r_{n,T} \} }  \quad  & \sum_{t=1}^{T} \mathsf{c}_{n,t} (d_{n,t}, \boldsymbol{l}_{n,t}^{(0)}, [\boldsymbol{E}_t^{(0)}]_n, \txTime_{n,t}^{(0)}, r_{ n,t}), \\
		\text{s.t.} \quad & \constraint{1}, \constraint{2}, \constraint{3}, \constraint{4}, \constraint{5}.
	\end{aligned}
\end{equation*}

The optimization of problem \p{3} can be conducted in two cases: the task offloading can be accomplished in the scheduling period and otherwise. Both cases are discussed below.

\textit{1) Case 1}: Suppose the task offloading of the $n$-th vehicle is accomplished in the $t_n^d$-th time slot and  $t_n^d \leq T$, the problem P3 can be rewritten as the following problem \p{4}, where the high SNR approximation is applied on the throughput of \eqref{equation: Throughput}. 
\begin{equation*}
	\begin{aligned}
		\p{4}: \min_{ \left\{ r_{n,t} \middle | t \le t_n^d \right\} } \quad & \omega_1 \sum_{t \le t_n^d} 2^{\frac{r_{n,t}}{\txTime_{n, t}^{(0)}\slotLen \bandwidth }} \Phi_{n,t}  \txTime_{n, t}^{(0)}, \\
		\text{s.t.}  \quad
		& 0 \le r_{n,t} \le \txTime_{n, t}^{(0)}\slotLen \bandwidth \log_2 \frac{P_{\max}}{\Phi_{n,t}}, \forall t, \\
		& r_{n,t} \le \bar{r}_{n,t}, \forall t, \\
		& D_n^a \le \sum_{t \le t_n^d} r_{n,t}, \\
	\end{aligned}
\end{equation*}
where $\Phi_{n,t} \triangleq 2^{-\mathds{E}_h [ \log_2 \frac{g_{n,m,t}\vert h \vert^2}{\sigma^2} ]}$.
Given $t_n^d$, the optimization of the throughput allocation is convex, which can be solved by existing methods. Notice that $t_n^d$ is an integer between $T_n^a$ and $T$, the optimization of $t_n^d$ is conducted by linear search.

\textit{2) Case 2}: The task offloading cannot be accomplished, the optimization problem can be written as
\begin{equation*}
	\begin{aligned}
		\p{5}:\!\min_{ \left\{ r_{n,t} \middle | t \le T \right\} }\!\!\quad & \omega_1 \sum_{t \le T} 2^{\frac{r_{n,t}}{\txTime_{n, t}^{(0)}\slotLen \bandwidth}} \Phi_{n,t}  \txTime_{n, t}^{(0)}  \!+\!\omega_2 (D_n^a - \sum_{t \le T} r_{n,t}), \\
		\text{s.t.}  \quad
		 & 0 \le r_{n,t} \le \txTime_{n, t}^{(0)}\slotLen \bandwidth \log_2 \frac{P_{\max}}{\Phi_{n,t}}, \forall t, \\
		 & r_{n,t} \le \bar{r}_{n,t}(\tau_{n,t}^{(0)}), \forall t, \\
		 & \sum_{t \le T} r_{n,t} < D_n^a.  \\
	\end{aligned}
\end{equation*}
The above problem \p{5} is convex which can be solved with existing methods. After solving problem \p{4} and \p{5}, the optimal throughput allocation of \p{3} is determined by comparing objective values of \p{4} and \p{5}, which is denoted as $\{\boldsymbol{r}_t^{(0)}| \forall t\}$.

\subsection{Online Dynamic Improvement} \label{subsection:dynamic-improvement}
\begin{table*}[!t]
    \centering
    \caption*{TABLE I:\ Definition of $f_{n, t+1}^1(d_{n,t+1})$ and $f_{n, t + 1}^2(d_{n,t+1}, \boldsymbol{l}_{n,t+1})$.}
    \label{table: case function}
        \begin{tabular}{c|c|c}
        \hline
         Condition  & $f_{n, t+1}^1(d_{n,t+1})$ & $f_{n, t + 1}^2(d_{n,t+1}, \boldsymbol{l}_{n,t+1})$  \\ \hline
         $d_{n,t+1} \le 0$ &  $0$      &$0$              \\ \hline
        $ \sum_{\kappa= 1}^{k} r_{n,t+k}^{s} < d_{n,t+1} \le \sum_{\kappa= 1}^{k+1} r_{n,t+k}^{s}$  & $k+1$     &        $\begin{aligned}
            & \sum_{\kappa= 1}^{k }\txTime_{n,t+\kappa}^{(0)} \varPhi_{n, t+\kappa|t+1} 2^{\frac{r_{n,t+\kappa}^{s}}{\txTime_{n,t+\kappa}^{(0)} \slotLen \bandwidth}}                                   \\
            & + \txTime_{n,t+k+1}^{(0)} \varPhi_{n, t+k+1|t+1} 2^{\frac{d_{n, t} - \sum_{\kappa= 0}^{k} r_{n,t+k}^{s}}{ \txTime_{n,t+k+1}^{(0)} \slotLen \bandwidth }}
            \end{aligned}$            \\ \hline
         $\sum_{\kappa= 1}^{T - t} { r_{n,t+\kappa}^{s} } < d_{n,t+1}$  & $T - t + \omega_2\left(d_{n,t+1} - \sum_{\kappa= 1}^{T - t} {r_{n,t + \kappa}^{s} }\right)$  & $ \sum_{\kappa= 1}^{T-t}\txTime_{n,t+\kappa}^{(0)} \varPhi_{n, t+\kappa|t+1} 2^{\frac{r_{n,t+k}^{s}}{\txTime_{n,t+\kappa}^{(0)}\slotLen \bandwidth}}$     \\ \hline
        \end{tabular}
\end{table*}

The reference scheduling obtained in the previous part will be updated in each time slot, which will be elaborated in Section \ref{subsection:update}. Let $\Omega_t^H\!\!\triangleq\!\! \{\mathsf{A}_k^{(t)}\!\!=\!(\boldsymbol{E}_k^{(0)}, \boldsymbol{\tau}_k^{(0)}, \boldsymbol{r}_k^{(t)})| \forall k\}$ be the reference scheduling updated after the $t$-th time slot. In this part, we consider the online scheduling optimization in the $t$-th time slot, $t\!=\!1,2,...,T$, based on the reference scheduling $\Omega_{t-1}^H$.

Particularly, the scheduling action of the $t$-th time slot can be obtained via \eqref{equation: Bellman-RHS} according to the system state $\mathsf{S}_{t}$. Based on the reference scheduling $\Omega_{t-1}^H$, we first introduce the following conclusion on the achievable average system cost.

\begin{Lemma}[Achievable Average System Cost]
    Given the residual information bits  $d_{n,t+1}$ and position $\boldsymbol{l}_{n,t+1}$ of the $n$-th vehicle at the beginning of the $(t+1)$-th time slot, following the BS association and time allocation of reference scheduling $\Omega_{t-1}^H$, the following average cost of the $n$-th vehicle from the $(t+1)$-th time slot to the last one, denoted as $\widetilde{V}_{n,t+1}$, is achievable by adjusting throughput allocation of $\Omega_{t-1}^H$,
    \begin{multline}
        \widetilde{V}_{n,t+1}( d_{n,t+1},\boldsymbol{l}_{n,t+1}) = f_{n, t+1}^1(d_{n,t+1})  
        \\ + \omega_1 f_{n, t+1}^2(d_{n,t+1}, \boldsymbol{l}_{n,t+1}),
    \end{multline}
    where $ f_{n, t+1}^1(d_{n,t+1}) $ and $f_{n, t+1}^2(d_{n,t+1}, \boldsymbol{l}_{n,t+1})$ are defined in Table {\rm I}, $k=1,2,\ldots,T-t-1$, and the $\varPhi_{n, t+k|t+1}$ is defined as
    \begin{equation*}
        \varPhi_{n, t+k|t+1} \triangleq 2^{-  \mathds{E}[\log_2 \frac{G\vert h \vert^2}{\sigma^2}]} \mathds{E} [\Vert \boldsymbol{l}_{n,t+k} - \boldsymbol{l}_m \Vert_2^{\gamma}| \boldsymbol{l}_{n,t+1} ].
    \end{equation*}
\end{Lemma}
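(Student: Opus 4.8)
The lemma is an achievability claim, so the plan is constructive: exhibit one throughput adjustment of $\Omega_{t-1}^H$ — keeping its BS associations $\{\boldsymbol{E}_k^{(0)}\}$ and time allocations $\{\boldsymbol{\txTime}_k^{(0)}\}$ untouched — and show that, starting from $(d_{n,t+1},\boldsymbol{l}_{n,t+1})$, the expected cost it incurs for vehicle $n$ over slots $t+1,\dots,T$ equals the right‑hand side of the lemma. Let $\{r^s_{n,t+\kappa}\}$ be the throughput profile carried by $\Omega_{t-1}^H$ on the remaining slots. The candidate policy transmits $r_{n,t+\kappa}=r^s_{n,t+\kappa}$ in each future slot until the buffer would be emptied, transmits exactly the residual in the slot that empties it, and transmits nothing afterwards; if the buffer is never emptied it uses $r^s_{n,t+\kappa}$ in all $T-t$ remaining slots. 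Feasibility with respect to $\constraint{1}$–$\constraint{4}$ is inherited: $\Omega_{t-1}^H$ satisfies $\constraint{5}$, so $r^s_{n,t+\kappa}\le\bar r_{n,t+\kappa}(\txTime^{(0)}_{n,t+\kappa})$ holds at every admissible position, and in the emptying slot the transmitted amount is only smaller.

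\emph{Non‑empty‑buffer term.} Given the state, both $d_{n,t+1}$ and the profile $\{r^s\}$ are deterministic, hence so is the queue trajectory under the candidate policy. The buffer is non‑empty exactly at the starts of slots $t+1,\dots,t+j^{\star}$, where $j^{\star}=\min\{\,j:\sum_{\kappa=1}^{j}r^s_{n,t+\kappa}\ge d_{n,t+1}\,\}$; writing $k=j^{\star}-1$ puts us in the middle row of Table~I and gives $\sum_{\kappa}\mathds{1}(d_{n,t+\kappa}>0)=k+1=f^1_{n,t+1}$. If no such $j^{\star}\le T-t$ exists, all $T-t$ indicators fire and the terminal penalty adds $\omega_2\bigl(d_{n,t+1}-\sum_{\kappa=1}^{T-t}r^s_{n,t+\kappa}\bigr)$, the third row; and $d_{n,t+1}\le 0$ gives $f^1_{n,t+1}=0$, the first row.

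\emph{Energy term.} Applying the high‑SNR approximation of \eqref{equation: Throughput} exactly as in \p{4}, the slot‑$(t+\kappa)$ energy $\omega_1 P_{n,t+\kappa}\txTime^{(0)}_{n,t+\kappa}$ equals $\omega_1\txTime^{(0)}_{n,t+\kappa}\,\Phi_{n,t+\kappa}\,2^{r_{n,t+\kappa}/(\txTime^{(0)}_{n,t+\kappa}\slotLen\bandwidth)}$, and separating the position‑independent fading factor from the path‑loss factor gives $\Phi_{n,t+\kappa}=2^{-\mathds{E}_h[\log_2(G|h|^2/\sigma^2)]}\Vert\boldsymbol{l}_{n,t+\kappa}-\boldsymbol{l}_m\Vert_2^{\gamma}$. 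Because the candidate throughputs $r_{n,t+\kappa}$ are committed at slot $t+1$ and do not react to the realized future positions, conditioning on $\boldsymbol{l}_{n,t+1}$ and taking expectation acts only on $\Phi_{n,t+\kappa}$, yielding $\mathds{E}[\Phi_{n,t+\kappa}\mid\boldsymbol{l}_{n,t+1}]=\varPhi_{n,t+\kappa|t+1}$ by the definition in the lemma. Summing the per‑slot expected energies by linearity — with $r_{n,t+\kappa}=r^s_{n,t+\kappa}$ for $\kappa\le k$, with $r_{n,t+k+1}$ equal to the residual $d_{n,t+1}-\sum_{\kappa=1}^{k}r^s_{n,t+\kappa}$ in the emptying slot, and $0$ afterwards — reproduces $\omega_1 f^2_{n,t+1}$ of Table~I; the same computation without a truncated slot gives the third‑row $f^2$. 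Adding the two terms establishes $\widetilde V_{n,t+1}=f^1_{n,t+1}+\omega_1 f^2_{n,t+1}$, and since the policy is feasible, this value is achievable.

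\emph{Expected main obstacle.} Everything above is mechanical except the factorization $\mathds{E}[\Phi_{n,t+\kappa}2^{r_{n,t+\kappa}/(\txTime^{(0)}_{n,t+\kappa}\slotLen\bandwidth)}\mid\boldsymbol{l}_{n,t+1}]=2^{r_{n,t+\kappa}/(\txTime^{(0)}_{n,t+\kappa}\slotLen\bandwidth)}\varPhi_{n,t+\kappa|t+1}$, which is valid only because the candidate policy uses a position‑independent throughput schedule; an adaptive schedule would couple $r_{n,t+\kappa}$ to $\Phi_{n,t+\kappa}$ and break it. One also has to check that the queue‑emptying bookkeeping — in particular the truncated transmission in the emptying slot and the endpoint cases of the ranges of $k$ in Table~I — is handled consistently, but this is routine once the policy is pinned down.
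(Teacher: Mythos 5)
Your proposal is correct and follows essentially the same route as the paper: achievability is shown constructively by keeping the BS association and time allocation of $\Omega_{t-1}^H$ fixed and adjusting only the throughputs to the auxiliary schedule $r^s_{n,t+\kappa}$ (truncated in the buffer-emptying slot), then reading off the indicator count as $f^1_{n,t+1}$ and the conditional-expected energy as $\omega_1 f^2_{n,t+1}$. The paper's own proof is just a terse sketch of this same construction; your added details (feasibility via $\constraint{5}$ and the factorization of the expected energy, valid because the committed schedule is position-independent) fill in exactly what it leaves implicit.
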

\begin{proof}
    The average system cost can be achievable as follows. 
    Let $\zeta^{(t-1)}$ be the offloading completion time slot of the reference scheduling $\Omega_{t-1}^H$.
    When $\Delta d_{n,t+1} \triangleq d_{n,t+1} - \sum_{k = 1}^{T-t} r_{n,t + k}^{(t-1)} > 0$, the additional $\Delta d_{n,t+1}$ information bits are transmitted from the $\zeta^{(t-1)}$-th time slot with maximum throughput of $ \{r_{n,\zeta^{(t-1)}}^{\mathrm{s}}, r_{n,\zeta^{(t-1)}+1}^{\mathrm{s}}, \ldots,r_{n,T}^{\mathrm{s}} \}$, where maximum throughput $r_{n,t}^{\mathrm{s}}$ is defined in \eqref{equation: auxiliary throughput}. On the other hand, when $d_{n,t+1} < \sum_{k = 1}^{T-t} r_{n,t + k}^{(t)}$, throughput allocation in the last time slot of the reference scheduling $\Omega_{t-1}^H$ can be suppressed.
\end{proof}

Hence, we define the approximate value function as
\begin{equation}
    \widetilde{V}_{t+1}(\mathsf{S}_{t+1}) \triangleq \sum_{n \in \mathcal{N}}\widetilde{V}_{n,t+1}( d_{n,t+1},\boldsymbol{l}_{n,t+1}).
\end{equation}
Replacing the optimal value function of \eqref{equation: Bellman-RHS} by the above approximate value function, the scheduling action optimization of the $t$-th time slot becomes
\begin{equation*}
    \begin{aligned}
    \p{6}: \min_{\mathsf{A}_t} \quad & 
    \mathsf{c}_t(\mathsf{S}_t, \mathsf{A}_t)
    + \sum_{\mathsf{S}_{t+1}} \mathds{P}\left[ \mathsf{S}_{t+1} | \mathsf{S}_{t}, \mathsf{A}_t \right] \widetilde{V}_{t+1}(\mathsf{S}_{t+1}), \\
    \text{s.t.}  \quad & \constraint{1}, \constraint{2}, \constraint{3}, \constraint{4}. \\
    \end{aligned}
\end{equation*}

The above problem \p{6} is convex when $t=T$, which represents the last time slot of scheduling period. 
However, when $t<T$, problem \p{6} becomes non-convex due to the approximate value function. In the following, the alternating minimization method is applied. Particularly, in each iteration, the BS association $\boldsymbol{E}_t$ and time allocation $\boldsymbol{\txTime}_t$ are jointly optimized while keeping the throughput allocation $\boldsymbol{r}_t$ fixed, and then the throughput allocation is optimized with the previously optimized BS association and time allocation.

\subsubsection{Optimization of the BS association and time allocation} 
Note that the optimal solution of BS association should search all the possible combinations, leading to prohibitive complexity. In order to search a sub-optimal association with low complexity, we first relax the problem by allowing each vehicle to associate with all the BSs. Thus, let $\tau_{n,m,t}$ be the uplink time ratio of the $m$-th BS allocated to the $n$-th vehicle in the $t$-th time slot, and  $\eta_{n,m,t} r_{n,t}$ be the information bits of the $n$-th vehicle uploaded via the $m$-th BS in the $t$-th time slot, and $ \mathcal{D}_t = \left( \txTime_{1,1,t}, \txTime_{1,2,t}, \ldots, \txTime_{1,M,t}, \txTime_{2,1,t}, \ldots, \txTime_{N,M,t} \right) $, and $\boldsymbol{\eta}_t = \left( \eta_{1,1,t}, \eta_{1,2,t}, \ldots, \eta_{1,M,t}, \eta_{2,1,t}, \ldots, \eta_{N,M,t} \right)$. Given the throughput allocation, the optimization of the BS association and time allocation with relaxation can be written as
\begin{equation*}
    \begin{aligned}
    \p{7}: \min_{\mathcal{D}_t, \boldsymbol{\eta}_t} \quad
                           & \omega_1 \sum_{n \in \mathcal{N}} \sum_{m \in \mathcal{M}} 2^{\frac{\eta_{n, m,t} r_{n,t}}{\txTime_{n,m,t}\slotLen \bandwidth}} \Phi_{n,t} \txTime_{n,m,t}, \\
    \text{s.t.}  \quad &  0 \le \txTime_{n, m, t}, 0 \le \eta_{n, m,t},\forall n, m, \\
            & \sum_{n \in \mathcal{N}} \txTime_{n,m,t} \le 1, \forall m, \\
            & \sum_{m \in \mathcal{M}} \eta_{n,m,t} = 1, \forall n. \\
    \end{aligned}
\end{equation*}
Note that the above problem \p{7} is convex. Let $\{\eta_{n,m,t}^{\ast}|\forall m\}$ be the optimal solution of $\{\eta_{n,m,t}|\forall m\}$, $\forall n$. The BS association can be determined as 
\begin{equation}
    e_{n,m,t} = \mathds{1} ( m = \argmax_{k \in \mathcal{M}} \eta_{n,k,t}^\ast ).
\end{equation}
Then, the optimal time allocation with the above BS association can be obtained by solving the convex problem \p{7} with $\eta_{n,m,t}=e_{n,m,t}$, $\forall n,m$. 

\subsubsection{Optimization of the throughput}
Given the BS association and time allocation, the throughput optimization of $N$ vehicles can be decoupled, where the $n$-th sub-problem for the $n$-th vehicle can be written as
\begin{equation*}
    \begin{aligned}
    \p{8}: \min_{r_{n,t}} \quad & 
    \omega_1 2^{\frac{r_{n,t}}{\txTime_{n,t}\slotLen \bandwidth}}\Phi_{n,t}\txTime_{n,t} + f_{n, t+1}^1(d_{n,t} - r_{n,t}) \\ 
    &\!+\!\omega_1\!\sum_{ \boldsymbol{l}_{n,t+1} }\!\mathds{P}\left[ \boldsymbol{l}_{n,t+1}|  \boldsymbol{l}_{n,t} \right] f_{n,\!t+1}^2(d_{n,t}\!-\!r_{n,t},\!\boldsymbol{l}_{n,t+1}), \\
    \text{s.t.}  \quad
    & 0 \le r_{n,t} \le \txTime_{n, t}\slotLen \bandwidth \log_2 \frac{P_{\max}}{\Phi_{n,t}}.
    \end{aligned}
\end{equation*}

Due to the structures of $f_{n, t+1}^1(d_{n,t} - r_{n,t})$ and $f_{n, t+1}^2(d_{n,t} - r_{n,t}, \boldsymbol{l}_{n,t+1})$, the objective of problem \p{8} is convex in $T-t+1$ regions of $r_{n,t}$, respectively. Hence, the optimal solution of $r_{n,t}$ is obtained by comparing the values of the objective.

The optimization of problem \p{7} and \p{8} can be applied iteratively until convergence. Denote the optimized BS association, time and throughput allocation after convergence as $\mathsf{A}_t^{\ast}=(\boldsymbol{E}_t^{\ast}, \boldsymbol{\txTime}_t^{\ast}, \boldsymbol{r}_t^{\ast})$.
In order to ensure the performance improvement, it is necessary to check if the optimized actions lead to lower average cost, thus,
\begin{eqnarray}
    &&\mathsf{c}_t\left(\mathsf{S}_t, \mathsf{A}_t^{\ast}\right) + \sum_{\mathsf{S}_{t+1}}\mathds{P}\left[ \mathsf{S}_{t+1} | \mathsf{S}_{t}, \mathsf{A}_t^{\ast} \right]\widetilde{V}_{t+1}(\mathsf{S}_{t+1}) \nonumber\\
    &\leq&  \mathsf{c}_t\left(\mathsf{S}_t, \mathsf{A}_t^{(t-1)}\right) + \sum_{\mathsf{S}_{t+1}}\mathds{P}\left[ \mathsf{S}_{t+1} | \mathsf{S}_{t}, \mathsf{A}_t^{(t-1)} \right]\widetilde{V}_{t+1}(\mathsf{S}_{t+1})\nonumber
\end{eqnarray}
The optimized scheduling action is applied in the $t$-th time slot if the above inequality holds. Otherwise, the reference scheduling action $\mathsf{A}_t^{(t-1)}=(\boldsymbol{E}_t^{(t-1)},\boldsymbol{\txTime}_t^{(t-1)},\boldsymbol{r}_t^{(t-1)})$ is applied in the $t$-th time slot, that is, $\mathsf{A}_t^{\ast}=\mathsf{A}_t^{(t-1)}$. Finally, the optimized scheduling policies of all time slots are denoted as $\widetilde{\Omega} = \{ \mathsf{A}_t^{\ast} | \forall t \}$. 

\subsection{Update of Reference Scheduling}\label{subsection:update}

Since the actual throughput allocation for the $n$-th vehicle ($\forall n$), $(r_{n,1}^{\ast},r_{n,2}^{\ast},...,r_{n,t}^{\ast})$, might be different from that in the reference scheduling, $(r_{n,1}^{(t-1)},r_{n,2}^{(t-1)},...,r_{n,t}^{(t-1)})$, the throughput allocation in the reference scheduling for the remaining time slots, $(r_{n,t+1}^{(t-1)},...,r_{n,T}^{(t-1)})$, should be updated after the transmission optimization of the $t$-th time slot. 

Particularly, let $r_{n, t + k}^{\mathrm{s}}$ be the auxiliary throughput allocation of the $n$-th vehicle in the $(t+k)$-th time slot as
\begin{equation}
	\label{equation: auxiliary throughput}
	r_{n, t + k}^{\mathrm{s}} \triangleq \begin{cases}
		r_{n, t + k}^{(0)}, & t + k \le t_n^d, \\
		\bar{r}_{n, t + k}(\tau_{n,t+k}^{(0)}), & \text{otherwise},
	\end{cases}
\end{equation}
where $t_n^d$ is the task accomplish time derived in Section \ref{section: offline pre-allocation}.
Then, based on $\Delta_{n,t+1}^k \triangleq d_{n,t+1} - \sum_{\kappa = 1}^{k} r_{n, t + \kappa}^{s}$, the throughput allocation of the $n$-th vehicle in the reference scheduling is updated to
\begin{equation}
	r_{n, t + k}^{(t)} = \begin{cases}
		r_{n, t + k}^{s}, & \Delta_{n,t+1}^k \ge 0, \\
		\left(  \Delta_{n,t+1}^{k-1} \right)^{+}, & \text{otherwise},
	\end{cases},
\end{equation}
where $k\in\{1, \ldots, T-t\}$. As a result, we have the following conclusion on the average total cost of the proposed dynamic improvement solution framework.
\begin{Lemma}[Non-trivial Performance Bound]
The average total cost of the proposed dynamic improvement framework, consisting of the reference scheduling initialization at the beginning of the scheduling period, online optimization of transmission action at the beginning of each time slot and the update of the throughput allocation of the reference scheduling, is upper-bounded by that of the reference scheduling. Thus
\begin{equation}
	\mathsf{C}_t( \mathsf{S}_t, \widetilde{\Omega}  ) \leq \mathsf{C}_t( \mathsf{S}_t, \Omega_{t-1}^H ), \forall t \in \{1, 2, \ldots, T-1\}.
\end{equation}
\end{Lemma}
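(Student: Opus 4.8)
The plan is to prove the bound by backward induction on $t$, running at each slot a one-step policy-improvement comparison between the action $\mathsf{A}_t^{\ast}$ actually applied by $\widetilde{\Omega}$ and the reference action $\mathsf{A}_t^{(t-1)}$ prescribed by $\Omega_{t-1}^{H}$, and closing the recursion with the induction hypothesis on the tail together with Lemma~1. Throughout I would treat the pair (current state $\mathsf{S}_t$, current reference scheduling) as the quantity on which the cost-to-go is conditioned, so that the Bellman unrolling $\mathsf{C}_t(\mathsf{S}_t,\Omega)=\mathsf{c}_t(\mathsf{S}_t,\mathsf{A}_t)+\sum_{\mathsf{S}_{t+1}}\mathds{P}[\mathsf{S}_{t+1}|\mathsf{S}_t,\mathsf{A}_t]\,\mathsf{C}_{t+1}(\mathsf{S}_{t+1},\Omega)$ is available for both $\widetilde{\Omega}$ and $\Omega_{t-1}^{H}$.

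I would first isolate three facts supplied by the framework. (i) For the action $\mathsf{A}_t^{\ast}$ actually applied, the acceptance inequality displayed at the end of Section~\ref{subsection:dynamic-improvement} holds --- either because $\mathsf{A}_t^{\ast}$ passed the test, or because it was reset to $\mathsf{A}_t^{(t-1)}$, in which case the two sides coincide. (ii) By Lemma~1 together with the update rule of Section~\ref{subsection:update}, which sets the remaining reference throughput to the auxiliary values $r_{n,\cdot}^{\mathrm{s}}$ that appear in Table~I, the updated reference $\Omega_t^{H}$ realizes exactly $\widetilde{V}_{t+1}$ from every reachable state, i.e.\ $\mathsf{C}_{t+1}(\mathsf{S}_{t+1},\Omega_t^{H})=\widetilde{V}_{t+1}(\mathsf{S}_{t+1})$. (iii) When the slot-$t$ action equals the reference action $\mathsf{A}_t^{(t-1)}$, the $\Delta$-recursions of Sections~\ref{subsection:dynamic-improvement} and~\ref{subsection:update} telescope and the update leaves the reference tail unchanged, so $\Omega_t^{H}$ and $\Omega_{t-1}^{H}$ agree on all slots $>t$ and hence $\mathsf{C}_{t+1}(\mathsf{S}_{t+1},\Omega_{t-1}^{H})=\mathsf{C}_{t+1}(\mathsf{S}_{t+1},\Omega_t^{H})$ for every $\mathsf{S}_{t+1}$ reachable from $(\mathsf{S}_t,\mathsf{A}_t^{(t-1)})$.

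The base case $t=T$ is direct (it serves only to start the recursion): $\widetilde{V}_{T+1}\equiv 0$, \p{6} at $t=T$ is convex and solved exactly with the reference action feasible for it, so the acceptance step gives $\mathsf{C}_T(\mathsf{S}_T,\widetilde{\Omega})=\mathsf{c}_T(\mathsf{S}_T,\mathsf{A}_T^{\ast})\le\mathsf{c}_T(\mathsf{S}_T,\mathsf{A}_T^{(T-1)})=\mathsf{C}_T(\mathsf{S}_T,\Omega_{T-1}^{H})$. For the inductive step, assume the claim at $t+1$, i.e.\ $\mathsf{C}_{t+1}(\mathsf{S}_{t+1},\widetilde{\Omega})\le\mathsf{C}_{t+1}(\mathsf{S}_{t+1},\Omega_t^{H})$ for all $\mathsf{S}_{t+1}$; then
\begin{align*}
  \mathsf{C}_t(\mathsf{S}_t,\widetilde{\Omega})
  &=\mathsf{c}_t(\mathsf{S}_t,\mathsf{A}_t^{\ast})+\sum_{\mathsf{S}_{t+1}}\mathds{P}[\mathsf{S}_{t+1}|\mathsf{S}_t,\mathsf{A}_t^{\ast}]\,\mathsf{C}_{t+1}(\mathsf{S}_{t+1},\widetilde{\Omega})\\
  &\le\mathsf{c}_t(\mathsf{S}_t,\mathsf{A}_t^{\ast})+\sum_{\mathsf{S}_{t+1}}\mathds{P}[\mathsf{S}_{t+1}|\mathsf{S}_t,\mathsf{A}_t^{\ast}]\,\mathsf{C}_{t+1}(\mathsf{S}_{t+1},\Omega_t^{H})\\
  &=\mathsf{c}_t(\mathsf{S}_t,\mathsf{A}_t^{\ast})+\sum_{\mathsf{S}_{t+1}}\mathds{P}[\mathsf{S}_{t+1}|\mathsf{S}_t,\mathsf{A}_t^{\ast}]\,\widetilde{V}_{t+1}(\mathsf{S}_{t+1}),
\end{align*}
where the inequality is the induction hypothesis and the second equality is fact~(ii). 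Fact~(i) then bounds the last expression above by $\mathsf{c}_t(\mathsf{S}_t,\mathsf{A}_t^{(t-1)})+\sum_{\mathsf{S}_{t+1}}\mathds{P}[\mathsf{S}_{t+1}|\mathsf{S}_t,\mathsf{A}_t^{(t-1)}]\,\widetilde{V}_{t+1}(\mathsf{S}_{t+1})$, and facts~(ii) and~(iii), applied along the transitions under $\mathsf{A}_t^{(t-1)}$, turn this into $\mathsf{c}_t(\mathsf{S}_t,\mathsf{A}_t^{(t-1)})+\sum_{\mathsf{S}_{t+1}}\mathds{P}[\mathsf{S}_{t+1}|\mathsf{S}_t,\mathsf{A}_t^{(t-1)}]\,\mathsf{C}_{t+1}(\mathsf{S}_{t+1},\Omega_{t-1}^{H})$, which equals $\mathsf{C}_t(\mathsf{S}_t,\Omega_{t-1}^{H})$ since $\mathsf{A}_t^{(t-1)}$ is the slot-$t$ action of $\Omega_{t-1}^{H}$. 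This is the claim at $t$, and the induction delivers it for all $t\in\{1,\dots,T-1\}$.

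The main obstacle is fact~(iii) and the equality (rather than just ``$\le$'') in fact~(ii): one has to check that the auxiliary-throughput bookkeeping in Table~I is exactly consistent with the queue dynamics of Section~\ref{sec: System Model} and with the $\Delta$-update of Section~\ref{subsection:update}, so that playing the reference action lands precisely in the state for which $\widetilde{V}_{t+1}$ was designed, the reference tail is genuinely left intact, and the expected cost incurred by $\Omega_t^{H}$ equals $\widetilde{V}_{t+1}$. This in turn relies on the high-SNR power--throughput identity $P_{n,k}\txTime_{n,k}^{(0)}=2^{r_{n,k}/(\txTime_{n,k}^{(0)}\slotLen\bandwidth)}\,\txTime_{n,k}^{(0)}\,2^{-\mathds{E}[\log_2 G\vert h\vert^2/\sigma^2]}\,\Vert\boldsymbol{l}_{n,k}-\boldsymbol{l}_m\Vert_2^{\gamma}$, whose conditional expectation over the Markov trajectory reproduces the $\varPhi_{n,t+k|t+1}$-terms of $f_{n,t+1}^2$, together with a case split --- task completed within the horizon or not, and how many extra slots are needed --- matching the three rows of Table~I against the evolution of $d_{n,\cdot}$. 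This part is routine but must be carried out carefully. Everything else (feasibility of $\widetilde{\Omega}$ under \constraint{1}--\constraint{4}, and the fact that the two sides of the acceptance inequality in~(i) carry the matched conditional laws $\mathds{P}[\cdot|\mathsf{S}_t,\mathsf{A}_t^{\ast}]$ and $\mathds{P}[\cdot|\mathsf{S}_t,\mathsf{A}_t^{(t-1)}]$) is immediate.
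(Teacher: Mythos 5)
Your proposal is correct and takes essentially the same route as the paper: the paper's entire proof is the one-sentence observation that the acceptance test in the online improvement step guarantees the applied action never does worse, with respect to $\widetilde{V}_{t+1}$, than the reference action $\mathsf{A}_t^{(t-1)}$, which is exactly the one-step policy-improvement comparison driving your induction. Your backward induction, with the explicit facts that the updated reference scheduling realizes $\widetilde{V}_{t+1}$ (Lemma~1 plus the update rule) and that replaying the reference action leaves the reference tail unchanged, merely supplies the bookkeeping that the paper leaves implicit.
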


\begin{proof}
The conclusion is straightforward since the average system cost of the reference scheduling is suppressed by the online optimization of Section \ref{subsection:dynamic-improvement} in each time slot.
\end{proof}
\section{Performance Evaluation}
\label{section: Performance Evaluation}

\begin{figure}[!t]
  \centering
  \includegraphics[width=0.3\textwidth]{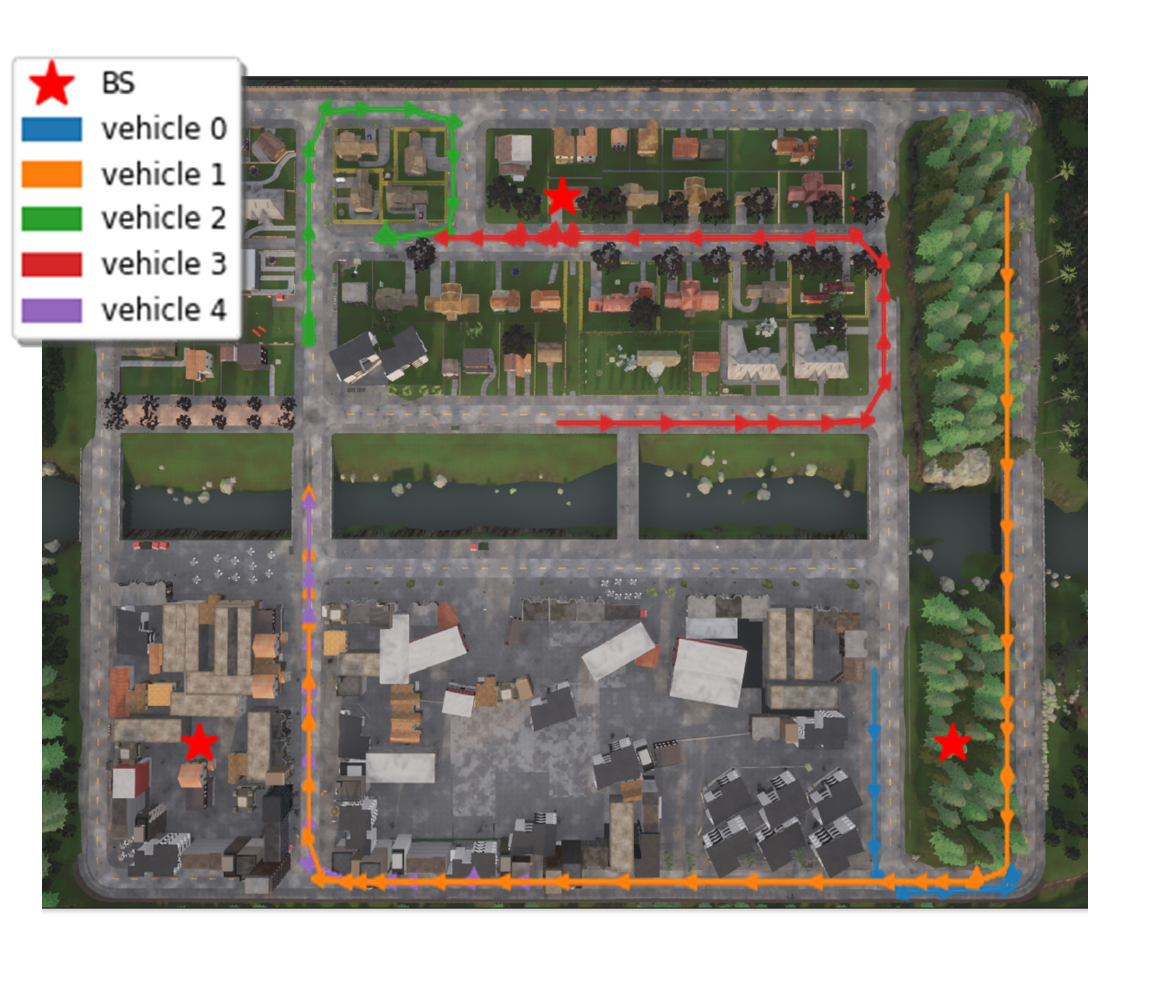}
  \caption{Illustration of simulation scenario, where stars and lines denote BSs' locations and vehicles' trajectories, respectively.}
  \label{figure: illustration of simulator}
  \vspace{-0.5cm}
\end{figure}

\begin{figure}[!t]
  \centering
  \includegraphics[width=0.4\textwidth]{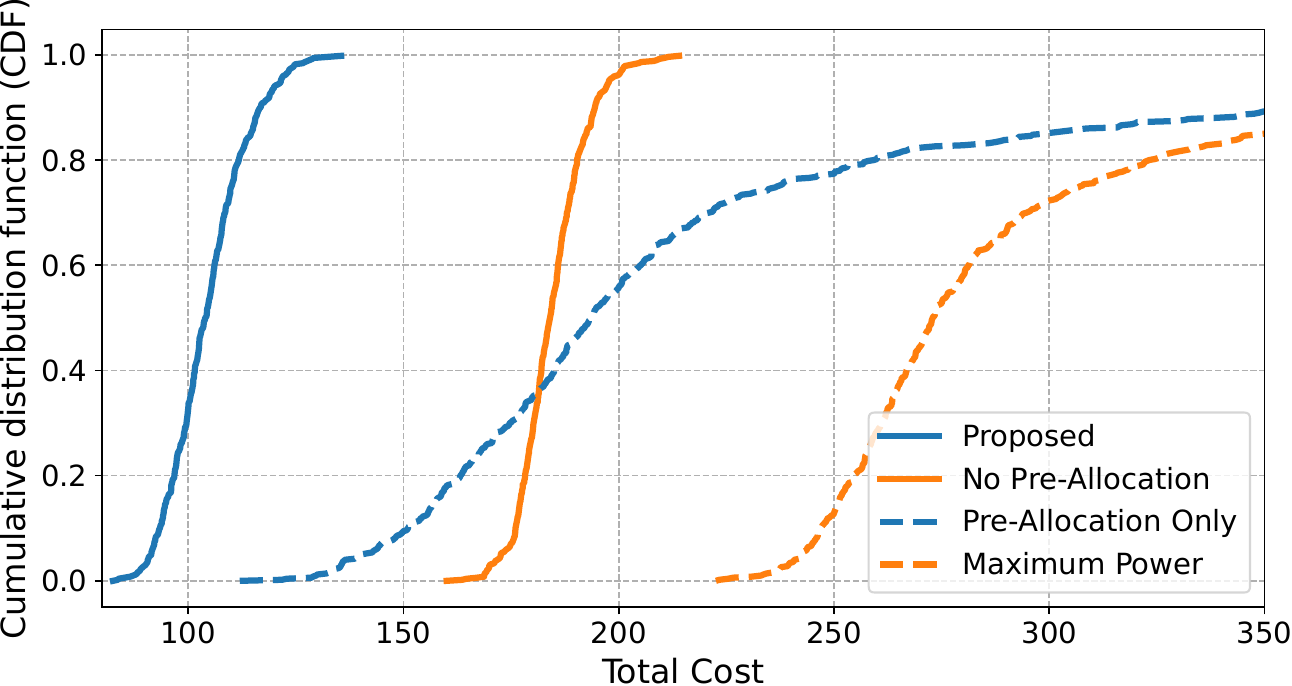}
  \caption{The CDFs of total cost.}
  \label{figure: cdf of cost}
  \vspace{-0.5cm}
\end{figure}

\begin{figure}[!t]
  \centering
  \includegraphics[width=0.4\textwidth]{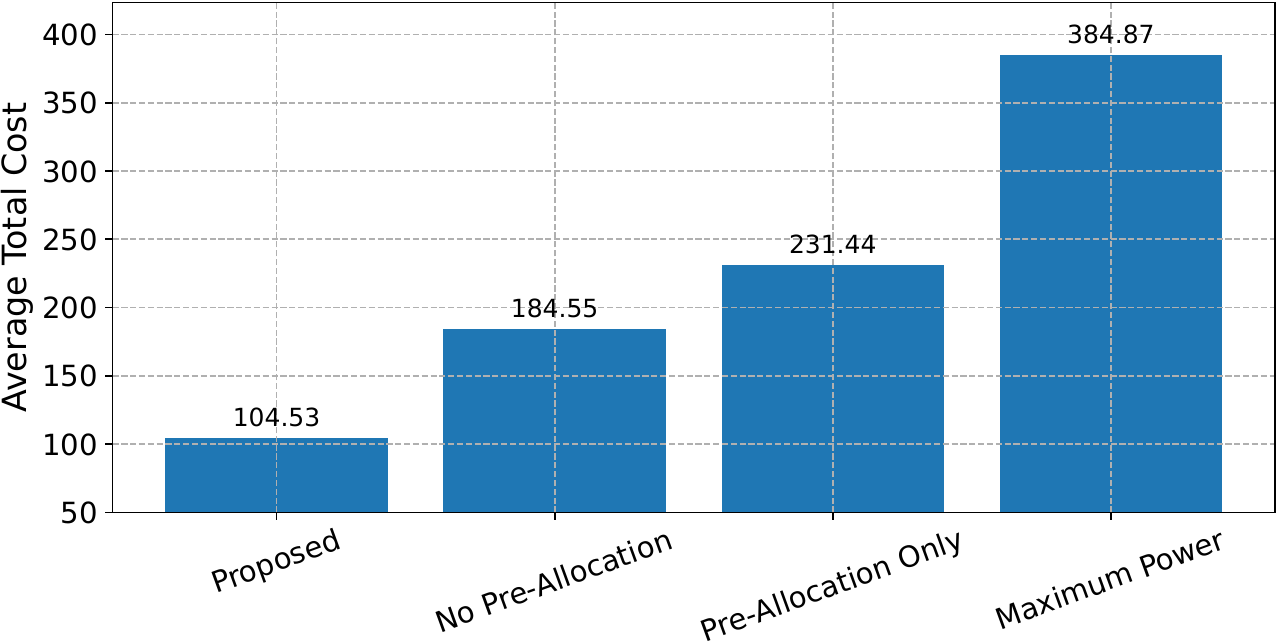}
  \caption{The comparison of average total costs.}
  \label{figure: average cost}
  \vspace{-0.7cm}
\end{figure}

In this section, we evaluate the performance of the proposed solution framework.
The vehicle's trajectories are randomly generated from the CARLA simulator \cite{dosovitskiy2017carla} in the road map \textit{Town01}. 
As shown in \figurename~\ref{figure: illustration of simulator}, the simulation involves $N = 5$ vehicles and $M = 3$ BSs. 
The scheduling period spans $T_p = 50$ seconds, divided into $T = 50$ slots, each lasting $\slotLen = 1$ second.
Each BS operates on a bandwidth of $\bandwidth = 20$ MHz. The pathloss exponent is $\gamma = 4$, and the noise power is $-30$ dBw.
Each vehicle has $360$ Mb information bits arrived at the beginning of the scheduling period, and its peak power constraint is $P_{\max} = 5$ W. 
The weights $\omega_1$ and $\omega_2$ are $2$ and $5$, respectively.

In the simulation, we compare the performance of the proposed framework with the following three baselines, where each algorithm is evaluated $500$ times to obtain the average total cost. 
The three baselines are as follows:
\begin{itemize}
  \item \textbf{Pre-Allocation Only}: The reference policy $\Omega_0^H$ is directly applied in offloading scheduling.
  \item \textbf{Maximum Power}: The $n$-th vehicle is associated with the BS with the best ergodic channel capacity, and transmit with the peak power level. The transmission time of each cell is equally allocated to the associated vehicles.
  \item \textbf{No Pre-Allocation}: Use the above maximum power scheme as the reference scheduling, and optimize the transmission scheduling of each time slot as Section \ref{subsection:dynamic-improvement}. 
\end{itemize}

The performance comparison is made in \figurename~\ref{figure: cdf of cost} and \figurename~\ref{figure: average cost}. Particularly, in \figurename~\ref{figure: cdf of cost}, the cumulative distribution functions (CDFs) of the total costs (obtained via $500$ trials) are illustrated. It can be observed that the proposed framework has significant performance gain over the baselines. Its gain over the scheme of no per-allocation demonstrates that the per-allocation is necessary to provide a good reference scheduling; while its gain over the scheme of pre-allocation only demonstrates the benefits of dynamic improvement. In addition to the CDF, their average total costs are compared in \figurename~\ref{figure: average cost}.
The proposed algorithm outperforms the schemes of pre-allocation only and maximum power with $54.83\%$ and $72.84\%$ average cost suppression, respectively.
Moreover, the proposed algorithm also outperforms the scheme of no pre-allocation with $43.36\%$ average cost suppression.
\section{Conclusion}
\label{sec: Conclusion}
The scheduling of task offloading in vehicular networks is considered in this paper.
The scheduling problem is formulated as an MDP, where the optimal solution suffers from the curse of dimensionality.
To address this issue, a low-complexity solution framework is proposed to achieve a sub-optimal solution with a non-trivial cost upper-bound.
The simulation results based on high-fidelity traffic simulator demonstrate that the proposed framework can achieve a significant performance gain compared with the baselines.

\bibliographystyle{IEEEtran}
\bibliography{IEEEabrv,reference}
\end{document}